\documentclass[runningheads]{llncs}
\usepackage[utf8]{inputenc}
\usepackage{graphicx}

\usepackage{amsmath}

\usepackage{amsthm}

\usepackage{amssymb}
\usepackage{xcolor}
\usepackage{hyperref}
\hypersetup{
	colorlinks,
	linkcolor={green!70!black},
	citecolor={blue!70!black},
	urlcolor={blue!80!black}
}

\usepackage{makecell}

\renewcommand{\O}{\mathcal{O}}

\newtheorem{observation}{Observation}

\begin{document}

\title{Optimal Window Queries on Line Segments using the Trapezoidal Search DAG\thanks{See \url{https://arxiv.org/abs/2111.07024} for the full version of this paper.}}
\titlerunning{Optimal Window Queries on Line Segments using the TSD}

\author{Milutin Brankovic\inst{1} \and Martin~P. Seybold\inst{2}}
\institute{
University of Sydney, School of Computer Science, Sydney, Australia\\
\email{mbra7655@uni.sydney.edu.au}
\and
University of Vienna, Faculty of Computer Science, Vienna, Austria
\email{martin.seybold@univie.ac.at}
}


\maketitle
\noindent
\makebox[\linewidth]{\small Published: \href{https://doi.org/10.1007/978-3-031-22105-7_46}{01 January 2023}}

\begin{abstract} 
We propose a new query application for the well-known Trapezoidal Search DAG (TSD) of a set of $n$~line segments in the plane, where queries are allowed to be {\em vertical line segments}.

We show that a simple Depth-First Search reports the $k$ trapezoids that are intersected by the query segment in ${\cal O}(k+\log n)$ expected time, regardless of the spatial location of the query.
This bound is optimal and matches known data structures with $\O(n)$ size. 
In the important case of edges from a connected, planar graph, our simplistic approach yields an expected ${\cal O}(n \log^*\!n)$ construction time, which improves on the construction time of known structures for vertical segment-queries. 
Also for connected input, a simple extension allows the TSD approach to directly answer axis-aligned window-queries in ${\cal O}(k + \log n)$ expected time, where $k$ is the result size.

\end{abstract}

\keywords{Window Queries, Line Segments, Output-Sensitive, Trapezoidal Search DAG, Depth-First Search}

\section{Introduction}
Intersection reporting problems can be stated as follows:
Given a set of geometric objects and a query object, report all elements of the set that intersect the query object.
A classic problem is to organize line segments in the plane such that segments that intersect an axis-aligned query window can be reported quickly.

It is well-known that the \emph{window-query} problem can be reduced to answering the two following problems:
Find all endpoints that are inside the query window (i.e. axis-orthogonal range queries) and find all segments that cross the window boundary (i.e. vertical and horizontal boundary queries). 
Clearly these two results can be assembled to the final set in time that is proportional to the output of the reporting query.
Using Chazelle's Range Trees~\cite{Chazelle86}
, one can determine the endpoints in $\O(k+\log n)$ time.
In practice however, simpler data structures with \emph{linear size} are dominant in applications like the KD-Tree~\cite{Bentley75}, the R-Tree~\cite{Guttman84}, and Delaunay Triangulations.

For intersections with the window boundary it suffices to solve the \emph{vertical queries}, since the horizontal boundary queries can be solved analogously (see~\cite[Chapter~$10$]{dutchBook}).
For sets of horizontal segments, one can built in $\O(n \log n)$ time an Interval Tree~\cite{edelsbrunner1980dynamic} based structure, using McCreight's Priority Search Trees~\cite{McCreight85}, of size $O(n)$ that supports vertical segment-queries in $\O(k + \log^2 n)$ time (see Sections $10.1$-$2$ 
in~\cite{dutchBook}).
For non-crossing line segments, Bentley's~\cite{BentleySegTree} well known  Segment Data Structure~(SDS) answers vertical segment-queries in $\O(k + \log^2 n)$ time, however the tree structures have $\O(n \log n)$ size~\cite[Chapter~$10.3$]{dutchBook}.

Our method is based on the randomized incremental construction of a planar subdivision. 
\emph{Trapezoidations} are generally defined for $x$-monotonous curves~\cite{Mulmuley91,Seidel93-cccg} (e.g. line segments) where each pair of curves is only allowed to share at most a finite number of points (i.e. endpoints or intersection points where curves cross) and no three curves may intersect in the same point.
Given such input, the planar subdivision is obtained by shooting a vertical ray up and down from every end and intersection point until it meets the first curve or the domain boundary, which results in $\O(n + I)$ (pseudo) trapezoidal faces where $I$ is the number of intersection points.
In particular, overlapping curves are excluded.
Though one can generally avoid degeneracies by slightly perturbing the input, it is \emph{necessary} for exact query reporting to explicitly handle degenerate input.
There are three well-known randomized incremental constructions to compute trapezoidations, and thereby report all $I$ intersection points, in $\O(I+n\log n)$ expected time.
The methods mainly differ in their strategy how the trapezoidal regions, that are affected by the next segment insertion, are located.
These are Clarkson and Shor's Conflict Graph method~\cite{ClarksonS89}, Mulmuley's Conflict Point+Walk method~\cite{Mulmuley90}, and Seidel's Locate+Walk method~\cite{Seidel91} that introduced the TSD.

\paragraph*{Related Work}
The walk `along the segment' in a trapezoidation may well require to check a large number of trapezoidal faces that are not intersected by the segment whenever the walk {\em crosses} a segment.
Utilizing randomization, Mulmuley showed that the total expected time of `face-transitions' from walks, over the entire construction, is $\O(I + n \log n)$.
(See Figure~$1$(f) for the illustration of one face-transition and Theorem~$5$ in \cite{Mulmuley90}.)
However, face-transition cost is not negligible if the walk's segment is a user specified query. 
In the point-query setting, the TSD is well-known to return the face of the trapezoidation that contains the query point, with high probability, in $\O(\log n)$ time~\cite[Section~$6.2$-$4$]{dutchBook}.
Thus, vertical ray-shooting queries, i.e. reporting the segment immediately above or below a query point, are supported.
Seidel~\cite[Thm.~2-3]{Seidel91} also showed that the expected TSD construction time can be made $\O(n \log^*\!n + \gamma\log n)$, where $\log^*$ dentotes the \emph{iterated logarithm}\footnote{I.e. the number of times the $\log$-function must be applied to obtain a result $\leq 1$.}, if the input is a plane graph with $\gamma$ connected components (e.g. $\gamma=1$ for simple polygons).

For $n$ non-crossing segments, Chazelle showed in~\cite[Sec.~$4.2$]{Chazelle86} that preprocessing with a `radar-beam' sweep-line allows to reduce the query problem to a related instance on $n$ non-crossing, \emph{horizontal} segments, using $\O(n)$ space and $\O(n \log n)$ time.
Moreover, Section~4.1 of his work shows that the trapezoidation of horizontal segments can be augmented with $\O(n)$ edges to a planar subdivision, called `Hive-Graph', such that a vertical walk has $\O(1)$ cost for each face-transition.
Finally, a point-location structure for the hive-graph allows answering vertical-queries in $\O(k + \log n)$ time.
For horizontal segments, the optimal query bound can also be achieved based on Persistence~\cite{DriscollST-persistence} or based on Random Sampling for constructing a two-layer data structure that stores in each leaf a `Conflict List' (see Section~4.3 in~\cite{Chan13}).

\paragraph{Contribution and Paper Outline}
In Sections~\ref{sec:TSD-construction}-\ref{sec:proof}, we show that the $\O(k + \log n)$ bound for vertical \emph{segment-queries} already holds for a Depth-First Search~(DFS) in the TSD of the input segments.
Thus, all preprocessing phases and additional, linear sized, data structures of known techniques are not required.
See Table~\ref{tab:vertical-query-structures} for a comparison of our TSD based approach with known data structures for vertical segment-queries.

\begin{table}[t]
    \centering
\begin{tabular}{l|l|l|l|c|c}
Data Structure         & Size           & Construction & Query Time      & Techn. &  Input \\ \hline
SDS~\cite{BentleySegTree}                    & $\O(n\log n)$  & $\O(n \log n)$  & $\O(k+\log^2 n)$ & Det. & non-crossing\\
Intervall Tree~\cite{McCreight85}   & $\O(n)$        & $\O(n \log n)$  & $\O(k+\log^2 n)$ & Det. & horizontal\\
Hive-Graph~\cite{Chazelle86}           & $\O(n)$  & $\O(n \log n)$  & $\O(k+\log n)$ & Det. & non-crossing\\
Persistentce~\cite{DriscollST-persistence,SarnakT86}  & $\O(n)$        & $\O(n \log n)$  & $\O(k+\log n)$ & Det. & non-crossing\\
Conflict Lists~\cite{Chan13}            & $\O(n)$  & $\O(n \log \frac{n}{\log n})$  & $\O(k+\log n)$ & Rnd. & horizontal \\

\hline 
~ & $\O(n + I)$ & $\O(n \log n + I)$ & ~ &  ~ & general \\
TSD               & $\O(n)$ & $\O(n \log n)$ & $\O(k+\log n)$ & Rnd. &   non-crossing  \\
~               & $\O(n)$ & $\O(n \log^*\!n)$ & ~ & ~ &  connected, planar \\
\hline
\end{tabular}
    \caption{Data structures for reporting line segments that cross a {\em vertical query segment}. $k$ denotes the output size, $I$ the number of intersection points ($I=0$ for non-crossing segments), and $\log^*$ the iterated logarithm.
    All TSD expectation bounds are over the randomness of the data structure.}
\label{tab:vertical-query-structures}
\end{table}

The simplicity of our method yields an improved construction time bound of $\O(n \log^*\!n)$ for the important case where segments are edges from a connected, planar graph. 
Though not requiring connectivity, the best known construction was $\O\!\left(n \log\frac{n}{\log n}\right)$ expected time, due to the random sampling of conflict lists (cf. \cite[Sec.~4.3]{Chan13}).
In Section~\ref{sec:connected-planar}, we give a simple extension that allows answering axis-aligned \emph{window-queries} on connected, planar graphs, directly with our TSD approach.

\section{Proposed TSD for General Input and the DFS Query}
\label{sec:TSD-construction}
To support exact reporting on a general input set $S$ with trapezoidation and TSD, we need to handle all input degeneracies explicitly.
In basic trapezoidations, non-crossing segments are only allowed to share a common end-point and crossing segments are assumed to have distinct intersection points (cf. Section~$6.3$ in~\cite{dutchBook}).
However, general input may well have overlapping segments and multiple intersections in the same point.
Known, exact implementations~\cite{BerberichFHMW10,HemmerKH12} treat overlapping segments at cost of information loss.
I.e. sweep line pre-processing removes degeneracies by merging and splitting overlapping segments.
It is then not possible to correctly answer queries on the resulting data structure without additional work. 
For example, the ply of a single result fragment can already be $\Omega(n)$ and there are $\Omega(n)$ such fragments in general, leading to a \emph{quadratic} space cost.
To our knowledge, there are no other publicly available implementations of TSDs that handle overlapping segments in the following way:

We define a \textit{segment bundle} to be the set of segments that are contained in the same support line and apply an implicit, infinitesimal vertical translation to distinguish between overlapping segments.
This is achieved by imposing a total ordering on the set of segments.
When two segments overlap, the `larger' segment in the ordering is considered to be higher in $y$-direction.
Bundles in the TSD give rise to trapezoids of zero geometric area.
%
Naïvely, it may appear that intersections of bundles need no special treatment and that one may treat the intersection of sets of overlapping segments no differently than regular segments. 
In such an approach however, there seems to be no clear tie-breaking between the resulting vertical partitions, since all are defined by the same geometric point.
Moreover, this approach gives rise to several trapezoids that only consist of a single point. 
Instead, our approach uses a single vertical partition for an intersection of bundles. 
We consider two segments, e.g. the `oldest' from a bundle, to be the defining segments of the intersection point of the vertical partition.
Other segments are considered to be `threaded' through the present intersection point.
Conceptually, this has the effect of `bending' the segments, resulting in degenerate trapezoids with zero height.
Intersecting bundles give rise to spatially empty $2$-gons in the trapezoidation.
We use TSD nodes that have arity $\leq4$ and store an explicit representation of their trapezoid boundary (i.e. pointers to input segments and intersection points).
See Figure~\ref{fig:seg_insert_multiple_faces} for an example.
(See also Section~$3$ and Theorem~$3$ in~\cite{Seidel91} for the analysis of the construction time.)


Using this tie-breaking, a vertical query segment $q$ that passes through a multiple intersection point $p$ intersects only the interior of two of the trapezoids that are incident to $p$.
We thus store with each intersection point of the trapezoidation a list of all segments that are threaded through it. 
For example, the intersection point in Figure~\ref{fig:seg_insert_multiple_faces} (square nodes) points to a list with the elements $\{s_1,s_2,s_3\}$.
This enables exact reporting queries \emph{solely} based on the TSD.

\begin{figure}[t]
    \centering
    \includegraphics[width=\columnwidth]{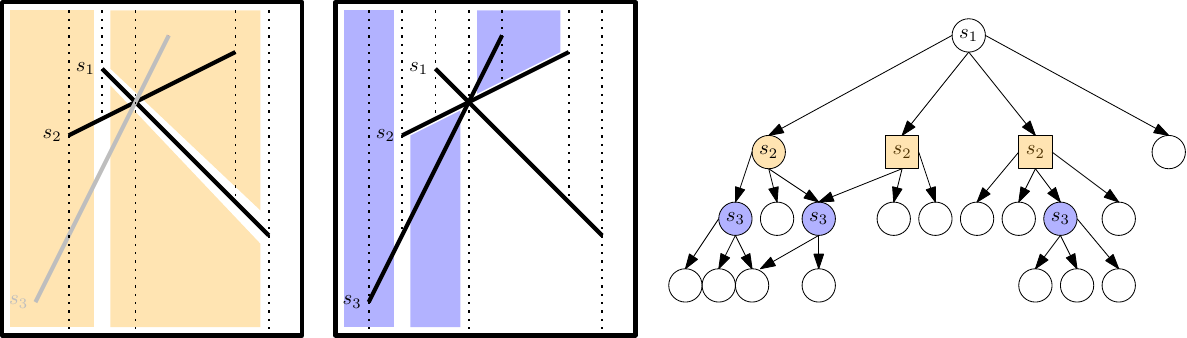}
    \caption{Incremental construction of the TSD (right) for the insertion order $(s_1,s_2,s_3)$ and the respective trapezoidations (left).
    The TSD has $14$ leaves, one for each face in the final trapezoidation.
    }
    \label{fig:seg_insert_multiple_faces}
\end{figure}

 \subsubsection{Don't Walk: Vertical Segment-Queries via DFS}
Starting at the root, our Depth-First Search visits at each node $\leq 2$ children, those whose trapezoidal region intersect the vertical segment-query $q$. 
The algorithm reports all reached TSD leaves, i.e. faces from the trapezoidation of $S$. 
The number of faces is $\leq 1+k$, where $k$ is the number of segments from $S$ that intersect $q$.
All intersected segments are obtained by inspecting the boundary of the result trapezoids.

\section{Expected Time of Vertical Segment-Queries} \label{sec:proof}

In this section, we show the following bound on the time complexity of our segment-query.

\begin{theorem}\label{query-bound}
Let $S$ be a set of $n$ segments, $q$ a vertical segment, and $k=|\{s \in S: s\cap q \neq \emptyset\}|$.
The expected number of TSD nodes, whose regions intersect $q$, is $\O(k + \log n)$.
\end{theorem}

All expectations are over the permutations of $S$.
To prove this, we require some definitions.
Let $q_0$ and $q_1$ denote the bottom and top endpoint of $q$, respectively.
Let $\mathcal{T}$ denote the set of trapezoids in the trapezoidation of $S$. Note that these trapezoids are in one-to-one relation with leaves of any TSD over $S$.
For $q$, we denote the result set by $R(q) = \{\Delta \in \mathcal{T} : \Delta \cap q \neq \emptyset\}$ and define the random variable $V_q$ to be the set of nodes of the TSD visited by the segment-query. 
For $\Delta \in R(q)$, let $q(\Delta)$ be the point in $q\cap \Delta$ with the lowest $y$-coordinate.
That is $q(\Delta)$ is either on the bottom segment of the trapezoid or $q_0$.
Let $K_q$ be the multi-set $\{q(\Delta):\Delta\in R(q)\}$ of those points.
($K_q$ is an ordinary set if all $\Delta \in R(q)$ have non-zero area.)
First observe that node set $V_q$ is the union of the search paths of the points in $K_q$.

\begin{observation}\label{tree-characterization}
    $V_q$ is exactly the set of TSD nodes whose region contains a point in $K_q$.
\end{observation}

Bounding the expected running time of the DFS query algorithm amounts to bounding the expected size of $V_q$ over the permutations of $S$.
A coarse bound is easily given by summing over the expected sizes of the search paths of each of the $\leq k+1$ points in $K_q$.
It is well-known that the length of any search path is $\mathcal{O}(\log n)$ in expectation, moreover this bound holds with high probability~\cite[Section~$6.4$]{dutchBook}.
This gives an expected query time bound of $\mathcal{O}(k\cdot \log n)$. 
We improve on this bound with a finer analysis that reduces overcounting.
For example, the DFS search visits the TSD root only once.

Our approach is based on the backwards analysis technique used to bound the expected running time of a point-query.
We briefly summarize the arguments given in Chapter~$6$ of \cite{dutchBook} below, before extending them to our problem.
Given a point-query $p$, we consider the sequence of leaves $A_0,A_1,\ldots ,A_n$ whose region contains $p$ at the respective step of the randomized incremental construction of the TSD.
That is, after $i$ segments had been inserted into the structure, $A_i$ was the leaf whose trapezoidal region contained $p$.
Note that each $A_i$ remains in the data structure, possibly as an internal node, even after more segments are inserted.
Observe that the number of node visits in a point location query for $p$ is equal\footnote{Our TSD construction in Section~\ref{sec:TSD-construction} uses one node of arity four per trapezoid in $\cal T$.} to the number of indices $i>0$ such that $A_i \neq A_{i-1}$.
For $i>0$, we define the random variables
\begin{align*}
X_i = \begin{cases}
   1,& \text{if } A_i \neq A_{i-1}\\
    0,              & \text{otherwise}
\end{cases} \quad.
\end{align*}

It is easy to see that the expected number of visited nodes for $p$ is exactly $\sum_{i=1}^{n}\mathbb{E}[X_i]$.
Observe that at most four segments bound the region of each $A_i$.
Thus $A_i \neq A_{i-1}$ if and only if one of the bounding segments of $A_i$ is the $i^{\text{th}}$ segment of the random order.
Since each of the first $i$ segments are equally likely to be at the $i^\text{th}$ position, we have that the probability that $A_i \neq A_{i-1}$, and therefore the expected value of $X_i$, is $\leq 4/i$.
It follows that the expected time of the point-location query for $p$ is 
        $\sum_{i=1}^{n} \mathcal{O}(\mathbb{E}[X_i]) = \mathcal{O}\left(\sum_{i=1}^{n} 4/i\right)
        = \mathcal{O}(\ln n) $.

We now extend this analysis technique for our segment-query bound.
Let $\pi:S\rightarrow \{1,...,n\}$ denote a permutation which determines an insertion order of segments into the TSD. 
We define the priority of a segment $s\in S$ to be $\pi(s)$. 
We define the priority of a trapezoid $\Delta \in \mathcal{T}$, denoted $\pi(\Delta)$, to be the priority of the segment which defines its bottom boundary. 
The priority of an internal TSD node is the priority of the segment that destroys it. 
One may use the backwards analysis argument above to bound the expected number of internal nodes $V' \subseteq V_q$ which contain a point in $K_q$. 
Clearly, the number of leaf nodes in $V_q$ is $|K_q|\leq k+1$ already.
Our proof bounds the expected number of visited nodes in separate priority value intervals.
Specifically, we partition the set of priority values $\{1, \ldots ,n\}$ into intervals of exponentially increasing length.
For each of the intervals, we bound the expected number of nodes from $V_q$, whose priority falls into the interval.
Counting this way turns out to reduce the overestimation sufficiently to obtain the desired ${\cal O}(k+\log n)$ bound on the expected query time.

For integer $\alpha \in \{1,\ldots,n\}$, we call the set of trapezoids $\mathcal{T}_{\leq \alpha} = \{\Delta \in \mathcal{T} : \pi(\Delta) \leq \alpha\}$ from trapezoidation $\cal T$ the $\alpha$-sample.
Furthermore, we say the point $q(\Delta)$ is $\alpha$-active, if its trapezoid $\Delta$ is in the $\alpha$-sample or $q(\Delta)=q_0$.

\begin{lemma} 
Let $S$ be a set of $n$ segments, $q$ a vertical segment, and integer $\alpha \in [1,n]$.
The expected size of $|R(q) \cap \mathcal{T}_{\leq \alpha}|$ is at most $1+k \tfrac \alpha  n$.
\end{lemma}

\begin{proof}
For any $\Delta \in \mathcal{T}_{\leq n}$, the probability that $\pi(\Delta) \in [1,\alpha]$ is exactly $\alpha/n$ since the segment defining the bottom boundary of $\Delta$ is uniform over $\{1, \ldots, n\}$.
Hence, summing over $K_q$ yields that the expected number of $\alpha$-active points is $\leq 1+k \tfrac \alpha n$.
\end{proof}

Note that $q_0$ is active for each integer $\alpha$. 
\begin{lemma}\label{alpha-active-points}
    The expected number of nodes in $V_q$ with priority in $(\alpha,2\alpha]$ is  ${\cal O}(1+ k\alpha / n)$.
\end{lemma}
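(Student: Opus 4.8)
The plan is to count separately the leaf nodes and the internal nodes of $V(q)$ whose priority lies in $(\alpha,2\alpha]$; the interval bookkeeping only does real work for the internal nodes. By Lemma~\ref{tree-characterization} the leaves of $V(q)$ are exactly the $k+1$ trapezoids of $R(q)$, and such a trapezoid $\Delta$ has priority in $(\alpha,2\alpha]$ precisely when its bottom-defining segment does, i.e. when $\Delta\in\mathcal{T}_{\le 2\alpha}\setminus\mathcal{T}_{\le\alpha}$. Hence by the preceding lemma on $\alpha$-active points their expected number is $\mathbb{E}[|R(q)\cap\mathcal{T}_{\le 2\alpha}|]-\mathbb{E}[|R(q)\cap\mathcal{T}_{\le\alpha}|]=k\alpha/n$ (for $2\alpha\le n$, capping at $n$ otherwise), which is already within the claimed bound.

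For the internal nodes I would set up the backwards analysis stage by stage. Writing $\mathcal{T}(j)$ for the trapezoidation of the first $j$ inserted segments and $s_p$ for the $p$-th inserted segment, an internal node has priority $p$ exactly when it was a leaf of $\mathcal{T}(p-1)$ destroyed by $s_p$; by Lemma~\ref{tree-characterization} it lies in $V(q)$ iff its region contains a point of $K_q$, which in particular forces the region to meet $q$. Let $Y_p$ be the number of such nodes, so the internal contribution is $\sum_{p=\alpha+1}^{2\alpha}Y_p$ and it suffices to bound each $\mathbb{E}[Y_p]$. The key geometric invariant I would first record is the one that makes verticality pay off: since $q$ is vertical it is parallel to all vertical walls and can only cross the segment-boundaries of trapezoids, so the trapezoids of $\mathcal{T}(p)$ meeting $q$ form a single stack whose consecutive members are separated exactly by the points where a prefix-segment crosses $q$. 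Their number is therefore $1+c_p$, where $c_p$ is the number of the first $p$ segments crossing $q$; as the $k$ segments crossing $q$ each lie among the first $p$ with probability $p/n$, we get $\mathbb{E}[c_p]=k\,p/n$.

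Next I would condition on the unordered set of the first $p$ segments, for which $s_p$ is uniform. The number of meeting-trapezoids never decreases as a segment is inserted (it goes from $1+c_{p-1}$ to $1+c_p$), so the number of meeting-trapezoids destroyed by $s_p$ is at most the number of meeting-trapezoids that $s_p$ creates; and every newly created trapezoid has $s_p$ among its at most four defining segments, exactly the fact underlying the point-query analysis. Thus $Y_p$ is at most the number of meeting-trapezoids of $\mathcal{T}(p)$ having $s_p$ as a defining segment. Averaging over $s_p$ and charging each of the $1+c_p$ meeting-trapezoids to its $\le 4$ defining segments gives $\mathbb{E}[Y_p\mid \text{first }p\text{ segments}]\le 4(1+c_p)/p$, and taking expectations yields $\mathbb{E}[Y_p]\le 4/p+4k/n$. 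Summing over the geometric interval, the harmonic part telescopes, $\sum_{p=\alpha+1}^{2\alpha}4/p=\mathcal{O}(1)$, while the flat part contributes $\alpha\cdot 4k/n$, so the internal nodes sum to $\mathcal{O}(1+k\alpha/n)$; adding the leaf bound $k\alpha/n$ proves the lemma.

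The step I expect to be the main obstacle is making the geometric invariant fully rigorous under the degeneracies treated in Section~\ref{sec:TSD-construction} (overlapping bundles, multi-intersections, and the resulting zero-height trapezoids), so that both ``number of meeting-trapezoids $=1+c_p$'' and ``number destroyed $\le$ number created along $q$'' survive intact; everything else is the standard $\le 4$ backwards-analysis charge. It is precisely the verticality of $q$ — which keeps $q$ from crossing any vertical wall, so that the meeting count is controlled by crossings alone — that turns the per-interval estimate into $\mathcal{O}(1+k\alpha/n)$ rather than the naive $\mathcal{O}(k)$ obtained by summing the point-wise bound $4/p$ over all $k+1$ points of $K_q$.
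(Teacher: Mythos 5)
Your proof is correct, but it takes a genuinely different route from the paper's. The paper decomposes $V(q)$ by the $k+1$ points of $K_q$: for each $q(\Delta)$ it runs the per-point backward analysis only over stages $i\in(\alpha,2\alpha]$ to get $\sum_{i=\alpha+1}^{2\alpha}\mathcal{O}(1)/i=\mathcal{O}(1)$ nodes conditioned on $q(\Delta)$ being $2\alpha$-active, and then multiplies by the activeness probability $\mathcal{O}(\alpha/n)$ via an indicator $X_\Delta$ and the identity $\mathbb{E}[X_\Delta Y_\Delta]=\Pr[X_\Delta=1]\,\mathbb{E}[Y_\Delta\mid X_\Delta=1]$. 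You instead decompose by insertion stage $p\in(\alpha,2\alpha]$ and bound the expected number of destroyed faces meeting $q$ globally, via the conservation identity (destroyed-meeting $\le$ created-meeting, since the stack size $1+c_j$ along $q$ is monotone in $j$) plus the uniform-last-segment charge $4(1+c_p)/p$ and $\mathbb{E}[c_p]=kp/n$. This is a Clarkson--Shor/Mulmuley-style "structural change along the query" argument rather than a per-point one. What your version buys: it sidesteps the paper's conditional expectation $\mathbb{E}[Y_\Delta\mid X_\Delta=1]=\mathcal{O}(1)$ (whose justification is delicate, since conditioning on the priority of $\Delta$'s bottom segment is not obviously innocuous for the $4/i$ charge) and the implicit claim that every visited node with priority in $(\alpha,2\alpha]$ contains some $2\alpha$-active point of $K_q$. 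What the paper's version buys: it never needs the exact prefix-stack count $1+c_p$ or the destroyed-versus-created comparison, only that each intermediate trapezoid containing a fixed point has $\mathcal{O}(1)$ defining segments. The one step you rightly flag --- that consecutive members of the stack along a vertical $q$ are separated by input segments (which persist, so no two destroyed stack members can merge), and that the count $1+c_p$ survives bundles and multi-intersections --- is exactly the prefix analogue of the paper's own unproven assertion $|R(q)|=k+1$, so your argument is at the same level of rigor as the paper on this point. Both routes ultimately rest on the same two probabilistic facts: a $\Theta(1/p)$ backward charge per stage and a $p/n$ presence probability for each of the $k$ crossing segments.
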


\begin{proof}
Let $p\in K_q \setminus \{q_0\}$ be a $2\alpha$-active point. 
For each $i\in\{\alpha+1,...,2\alpha\}$, denote by $A_i$ the leaf node whose region contained $p$ after $i$ segments had been inserted. 
Following above's backward analysis argument, we have that the expected number of nodes with priorities in $(\alpha, 2\alpha]$ that contain $p$ is 
    $\sum_{i=\alpha+1}^{2\alpha}\O(1)/i = \mathcal{O}(1)\log(2\alpha/\alpha) = \mathcal{O}(1)$.

Let $\Delta(N)$ denote the trapezoidal region of TSD node $N$.
For $p \in K_q\setminus\{q_0\}$, we define the random variables
\begin{align*}
    X_p &:= \begin{cases}
    1,& \text{if $p$ is $2\alpha$-active}\\
    0,              & \text{otherwise}
    \end{cases} \\
    Y_p &:= \Big|\big\{N\in V_q ~:~ p\in \Delta(N)~\wedge~ \pi(N)\in (\alpha, 2\alpha]\big\}\Big| \quad.
\end{align*}

The argument made above shows that $\mathbb{E}[Y_p|X_p=1]=\O(1)$.
Hence, we have 
\begin{align*}
    \mathbb{E}[X_p\cdot Y_p] &= 
        \Pr[X_p=1]     (1\cdot \mathbb{E}[ Y_p|X_p=1 ] ) + 
        \Pr[X_p=0]   (0\cdot \mathbb{E}[ Y_p|X_p=0 ] ) \\
    &= \Pr[X_p=1 ]   (1\cdot \mathbb{E}[ Y_p|X_p=1] ) 
    = \O(\alpha/n) \quad.
\end{align*}
Summing over the expectation bounds of the points in $K_q$, the expected number of nodes in $V_q$ of priority in the range $(\alpha, 2\alpha]$ is $\mathcal{O}(1) + \sum_{p \in K_q \setminus\{q_0\}}\mathbb{E} [X_p\cdot Y_p ] = \mathcal{O}(1 + k\alpha/n)$.
\end{proof}

We are now equipped to prove our main result.

\begin{proof}[Proof of Theorem \ref{query-bound}]
We have 
\begin{align*}
\mathbb{E} \left|V_q \right| 
&= \sum_{i=0}^{\log n}\mathbb{E} 
\Big| \big\{N: N\text{ is a node of } V_q \text{ with priority in } \left(2^{i-1}, 2^i\right] \big\} \Big| \\
&\leq \sum_{i=0}^{\log n}\mathcal{O} \left(1+k\frac{2^{i}}{n} \right)\\
&= {\cal O}(\log n) + \sum_{i=0}^{ \log n } \frac{{\cal O}( k)}{2^{\log n-i}} 
 ~\leq~ {\cal O}(\log n) + {\cal O}(k) \sum_{i \geq 0} 1/2^i = {\cal O}(k + \log n)~,
\end{align*}
where the first inequality is due to applying Lemma~\ref{alpha-active-points} for each of the ${\cal O}(\log n)$ intervals in the partition of priority values.
\end{proof}

\subsection{Windows queries on connected planar graphs} \label{sec:connected-planar}

For the important case that the segments $S$ are the edges of a connected planar graph, 
we use one TSD over $S$ and a second TSD over segments $S'$, that are rotated by $90$ degrees, for the vertical and horizontal segment-queries respectively.
Note both TSDs have $O(n)$ expected size and their construction takes $\O(n \log^*\!n)$ expected time (see~\cite[Thm.~3]{Seidel91}).
Thus, we can report all $k$ edges that intersect an axis aligned query window using a simple Breadth-First Search~(BFS) on the edges of the planar graph.
That is, starting from an edge that intersect the boundary, the BFS starts at the vertex inside the query window and visits only adjacent edges that are inside the query window in $\mathcal{O}(k)$ time.
We summarize.

\begin{corollary}\label{cor:joachim}
Reporting the $k$ edges from a connected planar graph of size $n$ that intersect an axis-aligned query window takes expected $\mathcal{O}(k+\log n)$ time.
The expected pre-processing time is $\O(n \log^*\!n)$.
\end{corollary}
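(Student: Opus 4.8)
The plan is to partition the edges that meet the query window $W$ into those that cross its boundary $\partial W$ and those that lie entirely inside $W$, and to recover the two groups by different means. For the boundary-crossing edges I would issue four segment-queries, one per side of $W$: the two vertical sides are answered by the TSD over $S$ and the two horizontal sides by the TSD over the $90^\circ$-rotated set $S'$, an edge crossing a given side being exactly an edge reported by the corresponding query. By Theorem~\ref{query-bound} each query runs in $\mathcal{O}(k_{\mathrm{side}}+\log n)$ expected time, and since every reported edge meets $W$ we have $\sum_{\mathrm{side}} k_{\mathrm{side}} = \mathcal{O}(k)$; hence this stage costs $\mathcal{O}(k+\log n)$ expected time and produces, as ``entry points'', every edge incident to $\partial W$.

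To recover the edges lying strictly inside $W$, I would run a BFS on the subdivision graph whose frontier consists of vertices inside $W$: from such a vertex it relaxes each incident edge that meets $W$, reporting it and pushing its other endpoint onto the frontier whenever that endpoint also lies inside $W$. Seeding the BFS with the interior endpoints of the boundary-crossing edges found above, each edge is touched a constant number of times, so this stage takes $\mathcal{O}(k)$ time. Adding the two stages gives the claimed $\mathcal{O}(k+\log n)$ expected bound, so the only thing left is correctness.

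Correctness is the step I expect to be the main obstacle, and it is precisely where connectivity of the subdivision is used. The claim is that the BFS reaches every vertex inside $W$, hence reports every edge with an endpoint inside $W$ (in particular every interior edge), while the remaining edges that pass through $W$ with both endpoints outside are already caught by the four segment-queries. I would prove the reachability by a first-exit argument: given an inside vertex $v$, connectivity provides a path in the graph from $v$ to some vertex outside $W$ (assuming such a vertex exists); letting $v_j$ be the first vertex of this path not inside $W$, the edge $v_{j-1}v_j$ crosses $\partial W$ and so its inside endpoint $v_{j-1}$ is one of the BFS seeds, while the prefix $v,\dots,v_{j-1}$ consists of inside vertices joined by edges that meet $W$, along which the BFS propagates back to $v$.

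Finally I would dispose of the degenerate case in which no edge crosses $\partial W$: by connectivity the whole subdivision then lies entirely inside or entirely outside $W$, which an $\mathcal{O}(1)$ inside-test on a single vertex decides; in the first case $k=n$ and a full traversal reports all edges in $\mathcal{O}(n)=\mathcal{O}(k)$ time, and in the second nothing is reported. (Vertices lying exactly on $\partial W$ are resolved by the same symbolic tie-breaking already used for degenerate input.) This completes the plan.
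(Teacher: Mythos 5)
Your proposal is correct and follows essentially the same route as the paper: four vertical/horizontal segment-queries on the two TSDs (over $S$ and the rotated set $S'$) to find the boundary-crossing edges in $\mathcal{O}(k+\log n)$ expected time, followed by a BFS over the subdivision graph restricted to the window to collect the interior edges in $\mathcal{O}(k)$ time. You additionally spell out the connectivity-based reachability argument and the fully-contained degenerate case, which the paper leaves implicit.
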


\subsubsection{Acknowledgements}
The authors want to thank Joachim Gudmundsson for suggesting the BFS query of Corollary~\ref{cor:joachim} and an anonymous reviewer for constructive feedback.
This work was supported under the Australian Research Council Discovery Projects funding scheme (project number DP180102870).

\bibliographystyle{splncs04}
\bibliography{references.bib}

\end{document}